\newtheorem{thm}{Theorem}[section]
\newtheorem{cor}[thm]{Corollary}
\newcommand{\e}[2]{{\mathop{#1}\limits_{#2}}}
\newcommand{\s}[1]{{\mathop{#1}\limits^{*}}}
\newcommand{\p}[1]{{\mathop{#1}\limits_{+}}}
\newcommand{\m}[1]{{\mathop{#1}\limits_{-}}}
\newcommand{\edf}{{\mathop{=}\limits^{\textmd{def.}}}}
\newcommand\overcirc[1]{\raisebox{10pt}{\tiny$\circ$}{\kern-6.5pt}\mbox{$#1$}}
\newcommand\undersym[2]{\raisebox{-6pt}{\tiny$#2$}{\kern-5pt}\mbox{$#1$}}
\numberwithin{equation}{section}
\date{}
\begin{document}
\title{Einstein Geometrization Philosophy and Differential Identities in PAP-Geometry}%
\author[1,4]{M. I. Wanas \thanks{miwanas@sci.cu.edu.eg}}
\author[2,4]{Nabil L. Youssef \thanks{nlyoussef@sci.cu.edu.eg\,;\, nlyoussef2003@yahoo.fr}}
\author[3,4]{W. El Hanafy \thanks{waleed.elhanafy@bue.edu.eg}}
\author[1,4]{S. N. Osman\thanks{samah@sci.cu.edu.eg}}
\affil[1]{\small \it Astronomy department, faculty of science, Cairo university, Egypt}
\affil[2]{\small \it Mathematics department, faculty of science, Cairo university, Egypt}
\affil[3]{\small \it Centre for theoretical physics, the British University in Egypt, 11837 - P.O. Box 43, Egypt}
\affil[4]{\small \it Egyptian Relativity Group, Cairo University, Giza 12613, Egypt}
\renewcommand\Authands{ and }
\maketitle
\begin{abstract}
The importance of Einstein's geometrization philosophy, as an alternative to the least action principle, in constructing general relativity (GR), is illuminated. The role of differential identities in this philosophy is clarified. The use of Bianchi identity to write the field equations of GR is shown. Another similar identity in the absolute parallelism geometry is given. A more general differential identity in the parameterized absolute parallelism geometry is derived. Comparison and interrelationships between the above mentioned identities and their role in constructing field theories are discussed.
\end{abstract}

\medskip\noindent{\bf Keywords:} geometrization philosophy; differential identities; AP-geometry, PAP-geometry; gravitation; field theories

\medskip\noindent \textbf{PACS:}\, 01.70.+w;\, 02.40.-k;\, 02.30.Xx;\, 04.20.-q;\, 04.50.-h;\, 11.10.-z

\medskip\noindent \textbf{MSC 2010:}\, 53B05;\, 53B20;\, 53B50;\, 83C05;\, 83D05

\section{Introduction}\label{1}
In the second decade of the twentieth century, Einstein constructed a successful theory for gravity, the general theory of relativity (GR) \cite{E1915-2}. Although an action principle has been used afterwards to derive the equations of the theory, Einstein has used the geometrization philosophy in constructing his theory. Among other principles of this philosophy is that laws of nature are just differential identities in the chosen geometry. The geometry used by Einstein at that time was Riemannian geometry, in which second Bianchi identity is a differential one. This identity can be written in its contracted form as
\begin{equation}\label{Bianchi}
G^{\alpha}_{.~\beta ; \alpha}\equiv 0 ,
\end{equation}
where $G_{\alpha \beta}$ is the Einstein tensor and the semicolon denotes the covariant differentiation using Christoffel symbols (The "dot" in the left hand side of \eqref{Bianchi} means that when we lower the upper index,  it occupies the place of the "dot").
Einstein has considered the identity \eqref{Bianchi} as a geometric representation of the law of conservation of matter and energy which is written as
\begin{equation}\label{Conservation}
T^{\alpha}_{.~\beta , \alpha}= 0 ,
\end{equation}
where $T^{\alpha \beta}$ is a symmetric second order tensor describing the material-energy distribution. The comma in \eqref{Conservation} denotes partial differentiation with respect to $x^\alpha$. In order for \eqref{Conservation} to represent a tensor, Einstein has replaced this comma by a semicolon and wrote the field equations of his theory as
\begin{equation*}\label{GR}
  G_{\mu \nu}=-\kappa T_{\mu \nu},
\end{equation*}
where $\kappa$ is a conversion constant. This shows the importance of differential identities in constructing GR.

In the third decade of the twentieth century, Einstein \cite{E28-1, E28-2} used another type of geometry, absolute parallelism (AP-) geometry in his attempt to construct a field theory unifying gravity and electromagnetism. It is characterized by a  non-symmetric linear connection, the  Weitzenb\"{o}ck connection, with vanishing curvature. After long discussions and correspondence between A. Einstein and E. Cartan \cite{ECletters} for more than three years, Einstein stopped using AP-geometry claiming that his attempt is not a successful one. This geometry has been neglected for about two decades until McCrea and Mikhail have reconsidered it to treat continuous creation of matter \cite{M52} and to build a steady state world model \cite{MM56}.

In the sixties of the past century, Dolan and McCrea (1963, private communication to the first author in 1973) have developed a variational method to get fundamental identities in Riemannian geometry, without using an action principle. Unfortunately, this method is not published although it is shown to be of importance when applied in geometries other than the Riemannian one.

In the seventies of the twentieth century two important results have been obtained by developing AP-geometry in a certain direction \cite{MW77, W75}. The first is the use of the dual connection (the transposed Weitzenb\"{o}ck connection) which has simultaneously non-vanishing torsion and non-vanishing curvature. The second result is a differential identity, in AP-geometry, obtained by developing the Dolan-McCrea method to suit AP-space with its dual connection. The identity obtained can be written in the form
\begin{equation}\label{AP-identity}
  E^{\mu}{_{{\nu}|\m\mu}}\equiv 0 \,,
\end{equation}
where the stroke $"|"$ and the ``--'' sign are used to denote covariant differentiation using the dual connection and $E^{\mu}_{.~\nu}$ is a second order non-symmetric tensor defined in the context of AP-geometry by \cite{MW77}
\begin{equation}\label{E-definition}
  \lambda \, E^{\mu}_{.~\nu} \overset{\text{def.}}{=}\frac{\delta \cal{L}}{\delta \underset{i}{\lambda}\, _{\mu}} \underset{i}{\lambda}\, _{\nu}\, ,
\end{equation}
where $\frac{\delta \cal{L}}{\delta \underset{i}{\lambda}\, _{\mu}}$ is the Hamiltonian derivative of the Lagrangian density $\cal{L}$ with respect to the building blocks (BB) $\underset{i}{\lambda}\, _{\mu}$ of AP-geometry  and $\lambda$ is the determinant of $(\underset{i}{\lambda}^\mu)$ (more details about  AP-geometry is given in the next section).

It is to be noted that the identity \eqref{AP-identity} is a generalization of the Bianchi identity \eqref{Bianchi} to AP-geometry. Einstein tensor $G^{\alpha}_{.~\beta}$ is uniquely defined in Riemannian geometry while $E^{\alpha}_{.\beta}$ is not uniquely defined in the AP-geometry. The later depends on the choice of the Lagrangian density $\cal L$ which has many forms in AP-geometry. Several field theories have been constructed \cite{MW77, WS2010, WSR}, using the geometrization philosophy and the identity \eqref{AP-identity}. Many successful applications have been obtained within these theories \cite{SO2010, SO2011, Mazumder-Ray90, MWE95, W85, W2007, WS2013}.

In the last decade of the twentieth century another important result has been obtained as a consequence of another development in AP-geometry. It has been discovered that AP-geometry admits a hidden parameter \cite{WMK95}. It was the Bazanski approach \cite{Bazanski89} that when applied to AP-geometry has led to the above mentioned hidden parameter. The generalization of this hidden parameter has given rise to a modified version of AP-geometry called  parameterized absolute parallelism (PAP-) geometry \cite{W98, W2000}.

In the past ten years, or so, AP-geometry has gained a lot of attention in constructing field equations for what is called $f(T)$ theories and their applications (c.f. \cite{Nashed-KN15, Nashed-torsion-matter15, Nashed-ElHanafy14}). The scalar $T$ is a torsion scalar characterizing the teleparallel equivalent of GR. It has been shown that many other scalars can be defined in AP-geometry and its different versions. These scalars have been used to construct several field theories with GR limits \cite{MW77,WS2010, WKamal2012, WSR, WYElHanafy2014, WYSid-Ahmed2010}.

It is the aim of the present work to find out differential identities in PAP-geometry and to study their relations to the similar identities in AP-geometry \eqref{AP-identity} and in Riemannian geometry \eqref{Bianchi}. For this reason we give a brief review of PAP-geometry in the next section. In section 3 we apply a modified version of the Dolan-McCrea variational method in PAP-geometry to get the general form of the identities admitted. In section 4 we present the interrelationships between the identities in Riemannian geometry, AP-geometry and PAP-geometry. We discuss the results obtained and give some concluding remarks in section 5.
\section{Brief review of PAP-geometry}\label{S2}
In the present section we are going to give basic mathematical machinery and formulae of PAP-geometry necessary for the present work. For more details, the reader is referred to
\cite{W98, W2000} and the references therein.

As mentioned in section 1, it has been shown \cite{WMK95} that the conventional AP-geometry admits a hidden jumping parameter. This parameter has been discovered when using the Bazanski approach \cite{Bazanski89} to derive path and path deviation equations. The importance of this parameter is that, among other things, its value jumps by a step of one-half, which has been tempted to show its relation to some quantum phenomena when used in application. This parameter has no explicit appearance in the conventional AP-geometry. This motivates authors to give the parameter an explicit appearance in a modified version of  AP-geometry known in the literature as PAP-geometry.

An AP-space is a pair  $(M,\, \undersym{\lambda}{i})$, where $M$ is a smooth manifold and\, $\undersym{\lambda}{i}$ are $n$ independent global vector fields $\undersym{\lambda}{i} (i=1,...,n)$ on $M$. For more details, see \cite{M62, W2001, NW13, NA07, NA08}. An AP-space admitted  at least four natural\footnote{"Natural" means that the geometric object under consideration is constructed from the building blocks $\,\undersym{\lambda}{i} $ only.} linear connections: the Weitzenb\"{o}ch connection
\begin{equation}\label{canonical}
\Gamma^{\alpha}_{~\mu\nu}\overset{\text{def.}}{=}
\undersym{\lambda}{i}^{\alpha}\,\undersym{\lambda}{i}_{\mu,\nu},
\end{equation}
its dual (or transposed)
\begin{equation}\label{dual}
\widetilde{{\Gamma}}^{\alpha}_{~\mu\nu}\overset{\text{def.}}{=} \Gamma^{\alpha}_{~\nu\mu},
\end{equation}
the symmetric part of \eqref{canonical}
\begin{equation}\label{symmetric}
\Gamma^{\alpha}_{~(\mu\nu)}\overset{\text{def.}}{=}
\frac{1}{2}(\Gamma^{\alpha}_{~\mu\nu} + \Gamma^{\alpha}_{~\nu\mu})
\end{equation}
and the Levi-Civita connection
\begin{equation}\label{Christoffel}
\{ ^{~\alpha}_{\mu ~\nu}\} \overset{\text{def.}}{=} \frac{1}{2}g^{\alpha \sigma}\left(g_{\mu \sigma, \nu}+g_{\nu \sigma,\mu}-g_{\mu \nu,\sigma}\right)
\end{equation}
associated with the metric  $g_{\mu \nu} \overset{\text{def.}}{=} \e{\lambda}{i}{_{\mu}}\e{\lambda}{i}{_{\nu}}$.

It is to be noted that all these linear connections have non-vanishing curvature except \eqref{canonical}. Among these connections, the only connection that has simultaneously  non-vanishing torsion and non-vanishing curvature  is the dual connection \eqref{dual}. In order to generalize these connections and to give an explicit appearance of the above mentioned jumping parameter, these connections have been linearly combined to give the object ($a_{1}\,, a_{2}\,,a_{3}\,,a_{4}$ are parameters):
\begin{equation}\label{p-gen-conn}
\nabla^\alpha_{~\mu\nu}=a_1 \Gamma^\alpha_{~\mu\nu}+ a_2 \widetilde{\Gamma}^\alpha_{~\mu\nu}+ a_3 \Gamma^\alpha_{~(\mu\nu)}
+a_4~ \{^{~\alpha}_{\mu~\nu}\}.
\end{equation}
Imposing the general coordinate transformation of a linear connection on \eqref{p-gen-conn}, using \eqref{canonical}-\eqref{Christoffel}, the above 4-parameters reduce to one parameter $b$, which leads to
\begin{equation}\label{p-canonical}
    \nabla^{\alpha}_{~\mu \nu}=\{^{~\alpha}_{\mu~\nu}\}+b\, \gamma^{\alpha}_{~\mu \nu},
\end{equation}
where $\gamma^{\alpha}_{~\mu \nu}$ is the contortion of  AP-geometry. The connection \eqref{p-canonical} will be called the parameterized canonical connection. It is non-symmetric with non-vanishing curvature, i.e.,  it is of Riemann-Cartan type. Moreover, it is a metric connection \cite{W98}.

The parameter $b$ is a dimensionless parameter. For some geometric and physical reasons \cite{W2000}, this parameter is suggested to have the form
\begin{equation}\label{b-def}
  b = \frac{N}{2}\alpha \gamma
\end{equation}
where $N$ is a natural number takeing the value 0, 1, 2, ...; $\alpha$ is the fine structure constant and $\gamma$ is a dimensionless parameter to be fixed by experiment or observation for the system under consideration.  The explicit appearance of the parameter $b$ in \eqref{p-canonical} gives rise to the following advantages to PAP-geometry:
\begin{enumerate}
	 \item In the case of $b=1$, \eqref{p-canonical} reduces to the Weitenb\"{o}ch connection \eqref{canonical} of the conventional AP-geometry.
  \item In the case of $b=0$, the connection \eqref{p-canonical} reduces to the Levi-Civita connection (\ref{Christoffel}).
  \item Between the limits $b=0$ and $b=1$ there is a discrete spectrum of spaces (due to the presence of $\frac{N}{2}$ in \eqref{b-def}), each of which has simultaneously non-vanishing curvature and torsion.
\end{enumerate}

In what follows, we will decorate the tensors containing the parameter $b$, by a star. Also, these tensors will retain the same name, as in the conventional AP-geometry, preceded by the adjective ``parameterized''. For example, the parameterized contortion is given by
\begin{equation}\label{p-contorsion}
    \s{\gamma}{^{\alpha}}_{\mu \nu}= b\, \gamma^{\alpha}_{~\mu \nu}.
    \end{equation}
The torsion of the parameterized connection \eqref{p-canonical} is  given by
\begin{equation}\label{p-torsion}
\s{\Lambda}{^{\alpha}}_{\mu \nu}= b(\gamma^{\alpha}_{~\mu \nu}-\gamma^{\alpha}_{~\nu \mu})=b\,\Lambda{^{\alpha}}_{\mu \nu},
\end{equation}
where $\Lambda^{\alpha}_{~\mu \nu}$ is the torsion of the AP-geometry. Also, the parameterized basic form can be obtained by contraction of \eqref{p-contorsion} or \eqref{p-torsion}
\begin{equation*}\label{p-vector}
\s{C}_{\mu}\edf \s{\Lambda}{^{\alpha}}_{\mu \alpha}=\s{\gamma}^{\alpha}_{~\mu \alpha}=b\,C_{\mu}.
\end{equation*}
It is to be noted that starred objects have the same symmetry properties as the corresponding unstarred objects defined in the conventional AP-geometry.

Now, since the parameterized canonical connection $\nabla^{\alpha}_{~\mu\nu}$ is non-symmetric, one can define from which two other linear connections: the parameterized dual (transposed)  connection
\begin{equation}\label{p-dual}
\widetilde{\nabla}^\alpha_{~\mu\nu}  \overset{\text{def.}}{=} \nabla^\alpha_{~\nu\mu}
\end{equation}
and the  parameterized symmetric connection
\begin{equation}\label{p-symm}
  \nabla^\alpha_{~(\mu\nu)}  \overset{\text{def.}}{=} \frac{1}{2}\,(\nabla^\alpha_{~\mu\nu}+ \nabla^\alpha_{~\nu\mu}).
\end{equation}
Consequently, using \eqref{p-canonical}, \eqref{p-dual} and \eqref{p-symm} we can define respectively the covariant derivatives:
\begin{equation}\label{d-p-canonical}
  A_{\mu\|\p\nu}\edf A_{\mu,\nu} -A_\alpha \nabla^{\alpha}_{.~\mu\nu},
\end{equation}
\begin{equation*}\label{d-p-duall}
  A_{\mu\|\m\nu}\edf A_{\mu,\nu} -A_\alpha \tilde{\nabla}^{\alpha}_{.~\mu\nu},
\end{equation*}
\begin{equation*}\label{d-p-symm}
  A_{\mu\|\nu}\edf A_{\mu,\nu} -A_\alpha \nabla^{\alpha}_{.~(\mu\nu)},
\end{equation*}
for any arbitrary vector $A_{\mu}$.\\

It is of importance to note that the BB of  PAP-geometry are the same as those of  AP-geometry. In other words, the parameter $b$ has no effect on the BB of the geometry.

\section{Dolan-McCrea Variational Method in PAP-geometry}\label{S3}
 We are going to apply the Dolan-McCrea variational method to obtain differential identities in  PAP-geometry. This method is an alternative to the least action method. It is originally suggested in 1963 to derive Bianchi and other identities in the context of Riemannian geometry. It has been generalized \cite{MW77} to AP-geometry and used to derive the identity \eqref{AP-identity}. Here, we generalize this method to PAP-geometry, giving some details since it is not widely known.

Since, as stated above, the BB of PAP-geometry are the same as those of AP-geometry, let us define the Lagrangian function
\begin{equation*}\label{Lsc}
\s{L} = \s{L}(\e{\lambda}{i}{_{\mu}},\e{\lambda}{i}{_{\mu,\nu}, \e{\lambda}{i}{_{\mu,\nu \sigma}}}),
\end{equation*}
constructed from the BB of PAP, $\e{\lambda}{i}{_{\mu}}$, and its first and second derivatives. Consequently, we can write the scalar density as follows
\begin{equation}\label{lag_dens_0}
    \mathcal{\s{L}}_{0} = \mathcal{\s{L}}_{0}(\e{\lambda}{i}{_{\mu}},\e{\lambda}{i}{_{\mu,\nu}}, \e{\lambda}{i}{_{\mu,\nu \sigma}}) = \lambda \s{L},
\end{equation}
where $\mathcal{\s{L}}_{0}$ is a scalar lagrangian density and $\lambda :=\det (\e{\lambda}{i}{^{\mu}})\neq0$.
Now, let us assume that the following integral, defined over some arbitrary $n$-dimensional domain $\Omega$,
\begin{equation*}\label{act_0}
    \s{I}{_{0}} \edf \int_{\Omega} \mathcal{\s{L}}_{0}(x) d^{n}x,
\end{equation*}
is invariant under the infinitesimal transformation
\begin{equation}\label{ltrans}
    \e{\lambda}{i}(x) \rightarrow \e{\lambda}{i}(x) + \epsilon \e{h}{i}(x),
\end{equation}
where $\epsilon$ is an infinitesimal parameter and $\e{h}{i}$ are $n$ 1-forms defined on $M$. Let $\Sigma$ be the $(n-1)$-space such that the n-space region $\Omega$ is enclosed within $\Sigma$. Thus
\begin{equation*}\label{act_eta}
   \s{I}{_{\epsilon}} \edf \int_{\Omega}\mathcal{\s{L}}_{\epsilon} d^{n}x,
\end{equation*}
where
\[d^{n}x = dx^{1} \,dx^{2}...\,dx^{n}
\]
and
\begin{equation}\label{lag_dens_eta}
    \mathcal{\s{L}}_{\epsilon} = \mathcal{\s{L}}\left(\e{\lambda}{i}{_{\mu}}+\epsilon\e{h}{i}{_{\mu}},\e{\lambda}{i}{_{\mu,\nu}}
    +\epsilon\e{h}{i}{_{\mu,\nu}},\e{\lambda}{i}{_{\mu,\nu \sigma}}+\epsilon\e{h}{i}{_{\mu,\nu \sigma}}\right).
\end{equation}
Assuming that $\e{h}{i}{_\mu}$ and $\e{h}{i}{_{\mu,\nu}}$ vanish at all points of $\Sigma$, then
\begin{equation}\label{Ieta-I0}
    \s{I}{_{\epsilon}}-\s{I}{_{0}} = \int_{\Omega}(\mathcal{\s{L}}_{\epsilon}-\mathcal{\s{L}}_{0}) d^{n}x,
\end{equation}
where $\s{I}_{0}$ is the value of $\s{I}_{\epsilon}$ at $\epsilon=0$. Now $\mathcal{\s{L}}_{\epsilon}$ can be written, using Taylor expansion, in the form,
\begin{equation}\label{lag_dens_eta2}
    \mathcal{\s{L}}_{\epsilon}=\mathcal{\s{L}}_{0}+\frac{\partial \mathcal{\s{L}}_{0}}{\partial \e{\lambda}{i}{_{\mu}}}\epsilon \e{h}{i}{_{\mu}}
    +\frac{\partial \mathcal{\s{L}}_{0}}{\partial \e{\lambda}{i}{_{\mu,\nu}}}\epsilon \e{h}{i}{_{\mu,\nu}}
    +\frac{\partial \mathcal{\s{L}}_{0}}{\partial \e{\lambda}{i}{_{\mu,\nu \sigma}}}\epsilon \e{h}{i}{_{\mu, \nu \sigma}}+O(\epsilon^{2}).
\end{equation}
Using (\ref{Ieta-I0})  and (\ref{lag_dens_eta2}) and integrating by parts, it can be shown, after some manipulation, that \eqref{Ieta-I0} can be written as
\begin{equation}\label{q40}
    \s{I}_{\epsilon}-\s{I}_{0} = \epsilon \int_{\Omega} \lambda \frac{\delta \s{L}}{\delta \e{\lambda}{i}{_{\mu}}} \e{h}{i}{_{\mu}} d^{n}x+O(\epsilon^{2}),
\end{equation}
where $\frac{\delta \s{L}}{\delta \e{\lambda}{i}{_{\mu}}}$ is the Hamiltonian derivative defined by
\begin{equation*}\label{Ham_derv}
    \frac{\delta \s{L}}{\delta \e{\lambda}{i}{_{\beta}}} \edf \frac{1}{\lambda}\left[\frac{\partial \mathcal{\s{L}}_{0}}{\partial {\mathop{\lambda}\limits_{i}}{_\beta}}- \frac{\partial}{\partial x^{\gamma}}\left(\frac{\partial \mathcal{\s{L}}_{0}}{\partial {\mathop{\lambda}\limits_{i}}{_{\beta, \gamma}}}\right)+\frac{\partial^2}{\partial x^{\gamma} \partial x^{\sigma}}\left(\frac{\partial \mathcal{\s{L}}_{0}}{\partial {\mathop{\lambda}\limits_{i}}{_{\beta, \gamma \sigma}}}\right)\right].
\end{equation*}
Consider a new set of coordinates $\bar{x}$ related to the $x$-coordinate system by the transformation
\begin{equation}\label{coo_tran}
    \bar{x}^{\mu}=x^{\mu}- \epsilon z^{\mu}(x),
\end{equation}
where $z^\mu$ is a vector field vanishing at all points of $\Sigma$ and $\epsilon$ is independent of $x$. We can write,
\begin{equation*}\label{Jacob1}
    \frac{\partial \bar{x}^\mu}{\partial \bar{x}^\nu}=\delta^{\mu}_{\nu}=\frac{\partial x^\mu}{\partial \bar{x}^\nu}-\epsilon \frac{\partial z^\mu}{\partial x^\sigma} \frac{\partial x^\sigma}{\partial \bar{x}^\nu},
\end{equation*}
\begin{equation}\label{Jacob2}
    \frac{\partial x^\mu}{\partial \bar{x}^\nu}=\delta^{\mu}_{\nu}+\epsilon z^{\mu}{_{,\sigma}}\frac{\partial x^\sigma}{\partial \bar{x}^\nu}.
\end{equation}
Since $z^\mu$ is a vector, its transformation law is
\begin{equation*}
    z^{\mu}(x)=\frac{\partial x^\mu}{\partial \bar{x}^\nu}\bar{z}^{\nu}(\bar{x}).
\end{equation*}
Substituting from (\ref{Jacob2}) into the above equation, we get
\begin{equation*}
    z^{\mu}(x)=\delta^{\mu}_{\nu}\bar{z}^{\nu}(\bar{x})+\epsilon \bar{z}^{\nu}(\bar{x})z^{\mu}{_{,\sigma}}\frac{\partial x^\sigma}{\partial \bar{x}^\nu}.
\end{equation*}
Substituting from (\ref{Jacob2}) again into the above equation gives
\begin{equation*}
    z^{\mu}(x)=\bar{z}^{\mu}(\bar{x})+\epsilon \bar{z}^{\nu}(\bar{x})z^{\mu}{_{,\sigma}}\left(\delta^{\sigma}_{\nu}+\epsilon z^{\sigma}{_{,\alpha}}\frac{\partial x^\alpha}{\partial \bar{x}^\nu}\right).
\end{equation*}
Finally, this equation can be written as
\begin{equation}\label{z_vec}
    z^{\mu}(x)=\bar{z}^{\mu}(\bar{x})+\epsilon \bar{z}^{\nu}(\bar{x})z^{\mu}{_{,\nu}}+O(\epsilon^2).
\end{equation}
Substituting from (\ref{z_vec}) into (\ref{coo_tran}) we get
\begin{equation*}\label{coo_tran2}
    x^\mu=\bar{x}^\mu+\epsilon \bar{z}^{\mu}(\bar{x})+O(\epsilon^2)
\end{equation*}
Consequently, we can write
\begin{equation*}\begin{split}\label{Jacob3}
    \frac{\partial \bar{x}^\mu}{\partial x^\nu}=&\delta^{\mu}_{\nu}-\epsilon z^{\mu}{_{,\nu}},\\
    \frac{\partial x^\mu}{\partial \bar{x}^\nu}=&\delta^{\mu}_{\nu}+\epsilon \bar{z}^{\mu}{_{,\nu}}+O(\epsilon^2).
\end{split}\end{equation*}
Then the Jacobian expressions of the above matrices can be given as
\begin{equation}\begin{split}\label{Jacob4}
    \frac{\partial \bar{x}}{\partial x}=&1-\epsilon z^{\alpha}{_{,\alpha}}+O(\epsilon^2),\\
    \frac{\partial x}{\partial \bar{x}}=&1+\epsilon z^{\alpha}{_{,\alpha}}+O(\epsilon^2),
\end{split}\end{equation}
since the difference between $z^{\mu}(x)$ and ${\bar{z}}^{\mu}(\bar{x})$ would be of order $\epsilon^2$ as shown by \eqref{z_vec}.

Now, we express $\e{\bar{\lambda}}{i}{_{\mu}}(\bar{x})$ in terms of $\e{\lambda}{i}{_{\mu}}(x)$ in two different ways:

  \emph{i} ) Using the vector transformation law:
  \begin{equation}\begin{split}\label{method1}
    \e{\bar{\lambda}}{i}{_{\mu}}(\bar{x})=&\frac{\partial x^{\nu}}{\partial \bar{x}^{\mu}}\e{\lambda}{i}{_{\nu}}(x)\\
    =&\delta^{\nu}_{\mu}\e{\lambda}{i}{_{\nu}}(x)+\epsilon \e{\lambda}{i}{_\nu}(x)\bar{z}^{\nu}{_{,\mu}}+O(\epsilon^2)\\
    =&\e{\lambda}{i}{_{\mu}}(x)+\epsilon \e{\lambda}{i}{_{\nu}}z^{\nu}{_{,\mu}}+O(\epsilon^2).
\end{split}\end{equation}

  \emph{ii} ) Using Taylor expansion:
  \begin{equation}\begin{split}\label{method2}
    \e{\bar{\lambda}}{i}{_{\mu}}(\bar{x})=&\e{\bar{\lambda}}{i}{_{\mu}}(x-\epsilon z)\\
    =&\e{\bar{\lambda}}{i}{_{\mu}}(x)
    -\epsilon \e{\bar{\lambda}}{i}{_{\mu,\nu}}(x) z^{\nu}+O(\epsilon^2)\\
    =&\e{\bar{\lambda}}{i}{_{\mu}}(x)-\epsilon \e{\lambda}{i}{_{\mu,\nu}}(x) z^{\nu}+O(\epsilon^2).
\end{split}\end{equation}
We have dropped the bar in the second term on the right hand side of this equation as the difference would be of order $\epsilon^2$.
Comparing (\ref{method1}) and (\ref{method2}) and using (\ref{ltrans}), we get
\begin{equation}\label{lbar-l}
    \e{\bar{\lambda}}{i}{_{\mu}}(x)-\e{\lambda}{i}{_{\mu}}(x)=\epsilon \e{h}{i}{_{\mu}}+O(\epsilon^2),
\end{equation}
where
\begin{equation}\label{himu}
    \e{h}{i}{_{\mu}}=\e{\lambda}{i}{_{\nu}}z^{\nu}{_{,\mu}}+\e{\lambda}{i}{_{\mu,\nu}}z^{\nu}
    =\e{\lambda}{i}{_{\nu}}z^{{\nu}}{_{\|\m \mu}}+\e{\lambda}{i}{_{{\mu}\|\p\nu}}z^{\nu},
\end{equation}
and the double stroke operator $"\|"$ is defined by \eqref{d-p-canonical}.

 Now, we treat the Lagrangian density in the two different ways mentioned above

  a) Using Taylor expansion:
  \begin{equation}\label{Lbar}
    \overline{\mathcal{\s{L}}}(\bar{x})=\overline{\mathcal{\s{L}}}(x-\epsilon z)
        = \overline{\mathcal{\s{L}}}(x)-\epsilon \mathcal{\s{L}}(x){_{,\alpha}} z^{\alpha} +O(\epsilon^2),
\end{equation}
also, we have dropped the bar in the second term on the right hand side of this equation as the difference would be of order $\epsilon^2$.\\
The Lagrangian density $\overline{\mathcal{\s{L}}}(x)$ in (\ref{Lbar}) means
\begin{equation}\label{Lbar2}
    \overline{\mathcal{\s{L}}}(x) \equiv \overline{\mathcal{\s{L}}}(\e{\bar{\lambda}}{i}{_{\mu}}(x),\e{\bar{\lambda}}{i}{_{\mu,\nu}}(x), \e{\bar{\lambda}}{i}{_{\mu,\nu \sigma}}(x)).
\end{equation}

  b) Using scalar transformation:
\begin{equation}\label{Lsc_trans}
\begin{array}{rl}
 \overline{\s{L}}(\bar{x})&=\s{L}(x),\\
\bar{\lambda} \overline{\s{L}}(\bar{x})&=\bar{\lambda} \s{L}(x) =\lambda \s{L}(x)\frac{\partial x}{\partial \bar{x}},
\end{array}
\end{equation}
then, using (\ref{lag_dens_0}) and (\ref{Jacob4}), we get
\begin{equation}\label{bar-Lx}
    \overline{\mathcal{\s{L}}}(\bar{x})=\mathcal{\s{L}}(x)+\epsilon \mathcal{\s{L}}(x) z^{\alpha}{_{,\alpha}}+O(\epsilon^2).
\end{equation}
Comparing (\ref{Lbar}) and (\ref{bar-Lx}), we have
\begin{equation*}
    \overline{\mathcal{\s{L}}}(x)-\mathcal{\s{L}}(x)=\epsilon(\mathcal{\s{L}} (x) z^{\alpha})_{,\alpha}+O(\epsilon^2).
\end{equation*}

By (\ref{Lsc_trans}), we get
\begin{equation*}
    \int_{\Omega}\overline{\s{L}}d^{n}\bar{x}-\int_{\Omega} \s{L} d^{n}x=0,
\end{equation*}
\begin{equation}\label{int_Lbar-L}
    \int_{\Omega}(\overline{\mathcal{\s{L}}}(x)-\mathcal{\s{L}}(x))d^{n}x=\int_{\Omega}[\epsilon(\mathcal{\s{L}}(x) z^{\alpha})_{,\alpha}+O(\epsilon^2)]d^{n}x.
\end{equation}
Applying Gauss's theorem to convert the volume integral on $\Omega$ ($n$-space) to a surface integral over $\Sigma$ ($(n-1)$-space), the integral vanishes as $z^{\alpha}$ vanishes at all points of $\Sigma$. Now, we can write
\begin{equation*}\label{van_int}
    \begin{split}
      \int_{\Omega}(\mathcal{\s{L}} (x) z^{\alpha})_{,\alpha} d^{n}x=& \int_{\Sigma} \mathcal{\s{L}} (x)z^{\alpha}n_\alpha d^{n-1}x \\
        =& \oint_{\Sigma} \s{L}(x)  z^{\alpha}n_{\alpha} d\Sigma \equiv 0,
    \end{split}
\end{equation*}
where $n_{\alpha}$ is a unit vector normal to $\Sigma$.\\

So, equation (\ref{int_Lbar-L}) shows that
\begin{equation}\label{int-Lbar-L2}
    \int_{\Omega}\left[\overline{\mathcal{\s{L}}}({x}) - {\mathcal{\s{L}}}(x)\right] d^n x=O(\epsilon^2).
\end{equation}
Substituting from (\ref{lbar-l}) into (\ref{Lbar2}), we get
\begin{equation*}\label{Lbar_dens}
    \overline{\mathcal{\s{L}}} (x) = \overline{\mathcal{\s{L}}} \left(\e{\lambda}{i}{_\mu}(x)+\epsilon \e{h}{i}{_\mu}+O(\epsilon^2),\e{\lambda}{i}{_{\mu,\nu}}(x)+\epsilon \e{h}{i}{_{\mu,\nu}}+O(\epsilon^2),\e{\lambda}{i}{_{\mu,\nu \sigma}}(x)+\epsilon \e{h}{i}{_{\mu,\nu \sigma}}+O(\epsilon^2)\right).
\end{equation*}
Using (\ref{lag_dens_eta}),
\begin{equation*}\label{Lbar_dens_Lden_eta}
    \overline{\mathcal{\s{L}}} (x)=\mathcal{\s{L}}_{\epsilon}(x)+O(\epsilon^2),
\end{equation*}
then, we can write \eqref{int-Lbar-L2} as
\begin{equation}\label{Final_int}
    \int_{\Omega}[\mathcal{\s{L}}_{\epsilon}(x)-\mathcal{\s{L}}_{0}(x)]d^n x = O(\epsilon^2).
\end{equation}
Comparing (\ref{Ieta-I0}), \eqref{q40} and (\ref{Final_int}) implies that
\begin{equation}\label{van_int2}
    \int_{\Omega}\lambda\frac{\delta \s{L}}{\delta \e{\lambda}{i}{_{\mu}}}\e{h}{i}{_{\mu}}d^n x \equiv 0.
\end{equation}
Substituting from (\ref{himu}) into (\ref{van_int2}), we write
\begin{equation}\label{Int_id}
    \int_{\Omega}\lambda \left[\frac{\delta \s{L}}{\delta\e{\lambda}{i}{_{\mu}}}~\e{\lambda}{i}{_{\nu}}~z^{{\nu}}{_{\|\m \mu}}+\frac{\delta \s{L}}{\delta\e{\lambda}{i}{_{\mu}}}~\e{\lambda}{i}{_{{\mu}\|\p \nu}}~z^{\nu}\right]d^n x \equiv 0.
\end{equation}
Let
\begin{equation}\label{HJ}
    \begin{split}
    \s{E}{^\mu}{_\nu} & \edf \frac{\delta \s{L}}{\delta \e{\lambda}{i}{_\mu}}\e{\lambda}{i}{_\nu}= - \e{\s{J}}{i}{^\mu} \e{\lambda}{i}{_\nu},\\
    \e{\s{J}}{i}{^\mu}& \edf -\frac{\delta \s{L}}{\delta \e{\lambda}{i}{_\mu}}.
    \end{split}
\end{equation}
The integral identity (\ref{Int_id}) then becomes,
\begin{equation}\label{Int_id2}
    \int_{\Omega}\lambda \left[\s{E}{^{\mu}}{_{\nu}}z^{{\nu}}{_{\|\m \mu}}-\e{\s{J}}{i}{^{\mu}}\e{\lambda}{i}{_{{\mu}\| \p\nu}}z^{\nu}\right]d^n x \equiv 0.
\end{equation}

Now, the integral of the first term can be treated as follows
\begin{equation*}\begin{split}
    \int_{\Omega} \lambda \s{E}{^{\mu}}{_{\nu}} z^{\m{\nu}}{_{\| \mu}} d^n x&=\int_{\Omega} \lambda (\s{E}{^{\mu}}{_{\nu}} z^{\nu}){_{\| \m{\mu}}} d^n x - \int_{\Omega} \lambda \s{E}{^{\mu}}{_{\nu \| \m{\mu}}} z^{\nu} d^n x,\\
    &=\int_{\Omega} (\lambda \s{E}{^{\mu}}{_{\nu}} z^{\nu}){_{,\mu}} d^n x - \int_{\Omega} \lambda \s{E}{^{\mu}}{_{\nu \| \m{\mu}}} z^{\nu} d^n x,\\
    &=\oint_{\Sigma} \lambda \s{E}{^{\mu}}{_{\nu}} z^{\nu} n_{\mu} d \Sigma
- \int_{\Omega} \lambda \s{E}{^{\mu}}{_{\nu \| \m{\mu}}} z^{\nu} d^n x,\\
&=- \int_{\Omega} \lambda \s{E}{^{\mu}}{_{\nu \| \m{\mu}}} z^{\nu} d^n x.
\end{split}\end{equation*}
Thus the integral identity (\ref{Int_id2}) reads
\begin{equation*}\label{Int_id3}
    \int_{\Omega} \lambda (\s{E}{^{\mu}}{_{\nu \| \m{\mu}}}+\e{\lambda}{i}{_{{\mu}\| \p\nu}}\e{\s{J}}{i}{^\mu}) z^{\nu} d^n x\equiv 0,
\end{equation*}
since $\lambda \neq 0$ and $z^{\nu}$ is an arbitrary vector field. Using the fundamental lemma of the calculus of variation \cite{CoV}, the expression in the brackets vanishes identically. Hence, we have
\begin{equation}\label{Identity}
    \s{E} {^{\mu}}{_{\nu \| \m{\mu}}}+\e{\lambda}{i}{_{{\mu}\| \p\nu}}\e{\s{J}}{i}{^\mu} \equiv 0.
\end{equation}
Using \eqref{p-canonical}, \eqref{d-p-canonical} and \eqref{HJ}, after some manipulations, the identity \eqref{Identity} can be written as
\begin{equation}\label{Identity3}
    \s{E}{^{\mu}}{_{\nu \| \m{\mu}}} = b(1-b) E^{\mu\alpha}~\gamma_{\alpha\mu\nu}.
\end{equation}
This is the differential identity characterizing PAP-geometry.
\section{Relation between Differential Identities}\label{S4}
The three differential identities, \eqref{Bianchi} of Riemannian geometry, \eqref{AP-identity} of AP-geometry and \eqref{Identity3} of PAP-geometry are to be related. This idea comes from the fact that the general linear connection \eqref{p-canonical} reduces to the Weitzenb\"ock connection \eqref{canonical} of the AP-space  for $b=1$ and to the Live-Civita connection \eqref{Christoffel} of the Riemannian space for $b=0$. Note that the three identities mentioned are derivatives in which linear connections are used. The following two results relate these identities.

\begin{thm}
If $\s{E}{^{\mu\nu}}$ is a tensor of order 2 on a PAP-space $(M,\,\undersym{\lambda}{i}\,)$, then we have the identity
\begin{equation}\label{prop1}
\s{E} {^{\mu}}{_{\nu \|\m{\mu}}}= \s{E} {^\mu}_{\nu;\mu}+b\,\s{E} {^{\mu\alpha}}\,\gamma_{\mu\alpha\nu}.
\end{equation}
Consequently, if $b=0$ (the Riemannian case) or $\s{E} {^{\mu\nu}}$ is symmetric, we get $$\s{E}{^{\mu}}{_{\nu \|\m{\mu}}}= \s{E} {^\mu}_{\nu;\mu}$$
\end{thm}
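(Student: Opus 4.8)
The plan is to expand the left-hand side $\s{E}{^{\mu}}{_{\nu\|\m{\mu}}}$ directly from the definition of covariant differentiation, substitute the decomposition \eqref{p-canonical} of the connection into a Levi-Civita part plus $b$ times the contortion, recognize the Levi-Civita contribution as $\s{E}{^\mu}_{\nu;\mu}$, and simplify the two leftover contortion terms. Concretely, extending the rule \eqref{d-p-canonical} to the $(1,1)$ tensor $\s{E}{^\mu}_{\nu}$ by the Leibniz rule and using the dual connection \eqref{p-dual}, I would first write
\begin{equation*}
\s{E}{^{\mu}}{_{\nu\|\m{\sigma}}} = \s{E}{^\mu}_{\nu,\sigma} + \widetilde{\nabla}^{\mu}_{~\alpha\sigma}\,\s{E}{^\alpha}_{\nu} - \widetilde{\nabla}^{\alpha}_{~\nu\sigma}\,\s{E}{^\mu}_{\alpha},
\end{equation*}
then replace $\widetilde{\nabla}^{\alpha}_{~\mu\nu}=\nabla^{\alpha}_{~\nu\mu}$ from \eqref{p-dual} and contract $\sigma=\mu$.

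Inserting \eqref{p-canonical}, each connection coefficient splits as $\{^{~\alpha}_{\mu~\nu}\}+b\,\gamma^{\alpha}_{~\mu\nu}$. Using only the symmetry of the Christoffel symbols in their lower indices, the three Christoffel-built terms reassemble exactly into the contracted Levi-Civita divergence $\s{E}{^\mu}_{\nu;\mu}$, leaving
\begin{equation*}
\s{E}{^{\mu}}{_{\nu\|\m{\mu}}} = \s{E}{^\mu}_{\nu;\mu} + b\,\gamma^{\mu}_{~\mu\alpha}\,\s{E}{^\alpha}_{\nu} - b\,\gamma^{\alpha}_{~\mu\nu}\,\s{E}{^\mu}_{\alpha}.
\end{equation*}

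The crux, which I expect to be the main obstacle, is disposing of these two contortion terms; this hinges on the skew-symmetry of the contortion in its first two indices, $\gamma_{\alpha\mu\nu}=-\gamma_{\mu\alpha\nu}$. Since this property is not stated verbatim in Section~\ref{S2}, I would derive it from the fact that both the Weitzenb\"ock connection \eqref{canonical} and the Levi-Civita connection \eqref{Christoffel} are metric: subtracting their two metricity conditions shows that their difference, the contortion, satisfies $\gamma_{\alpha\mu\nu}+\gamma_{\mu\alpha\nu}=0$. Granting this, the trace term vanishes because $\gamma^{\mu}_{~\mu\alpha}=g^{\mu\beta}\gamma_{\beta\mu\alpha}$ contracts the symmetric $g^{\mu\beta}$ against an object skew in $\beta,\mu$. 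For the remaining term I would lower the contravariant index, write $-b\,\gamma^{\alpha}_{~\mu\nu}\,\s{E}{^\mu}_{\alpha}=-b\,\gamma_{\alpha\mu\nu}\,\s{E}{^{\mu\alpha}}$, and apply the skew-symmetry once more to flip the sign and reorder the indices, producing exactly $+b\,\s{E}{^{\mu\alpha}}\,\gamma_{\mu\alpha\nu}$. This yields \eqref{prop1}.

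Finally, the stated consequence is immediate from \eqref{prop1}: when $b=0$ the correction term is absent, and when $\s{E}{^{\mu\nu}}$ is symmetric the contraction $\s{E}{^{\mu\alpha}}\,\gamma_{\mu\alpha\nu}$ pairs a tensor symmetric in $\mu,\alpha$ with one skew in $\mu,\alpha$ and therefore vanishes. In either case $\s{E}{^{\mu}}{_{\nu\|\m{\mu}}}=\s{E}{^\mu}_{\nu;\mu}$, as claimed.
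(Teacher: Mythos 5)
Your proof is correct and follows essentially the same route as the paper's: expand the dual-connection divergence, split $\nabla^{\alpha}_{~\mu\nu}$ into its Christoffel part plus $b\,\gamma^{\alpha}_{~\mu\nu}$, reassemble the Christoffel terms into $\s{E}{^\mu}_{\nu;\mu}$, kill the trace term, and reshape the remaining contortion term via the skew-symmetry $\gamma_{\alpha\mu\nu}=-\gamma_{\mu\alpha\nu}$. The only difference is minor: the paper simply cites that skew-symmetry as a known property of the contortion, whereas you (correctly) rederive it from the metricity of the Weitzenb\"{o}ck and Levi-Civita connections.
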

\begin{proof} Using the definition of the parameterized dual connection $\tilde{\nabla}^{\alpha}_{\,\,\,\,\mu\nu}$, we have
\begin{eqnarray*}
  \s{E}{^{\mu}}{_{\nu \|\m{\mu}}} &=& \s{E} {^\mu}_{\nu,\mu} - \s{E} {^\mu}_{\alpha} \nabla^{\alpha}_{\mu\nu} +
  \s{E} {^\alpha}_{\nu} \nabla^{\mu}_{\mu\alpha} \\
    &=& \s{E} {^\mu}_{\nu,\mu} - \s{E} {^\mu}_{\alpha}\,\{ ^{~\alpha}_{\mu ~\nu}\} - b\,\s{E} {^\mu}_{\alpha}\gamma^{\alpha}_{\mu\nu}
    + \s{E} {^\alpha}_{\nu}\, \{ ^{~\mu}_{\mu ~\alpha}\}+ b\,\s{E} {^\alpha}_{\nu}\gamma^{\mu}_{\mu\alpha}
\end{eqnarray*}
As $\gamma_{\mu\nu\sigma}$ is skew-symmetric in the first pair of indices, the last term on the right vanishes. Hence, we get
$$\s{E}{^{\mu}}{_{\nu \|\m{\mu}}}= \s{E} {^\mu}_{\nu;\mu}- b\,\s{E} {^\mu}_{\alpha}\gamma^{\alpha}_{\,\,\,\mu\nu}.$$
Now, consider the last term on the right of the above identity:\\ $\s{E} {^\mu}_{\alpha}\gamma^{\alpha}_{\,\,\,\mu\nu}
= \s{E} {^\mu}_{\alpha}\delta^{\alpha}_{\epsilon}\gamma^{\epsilon}_{\,\,\,\mu\nu}
= \s{E} {^\mu}_{\alpha}\,g^{\alpha\beta}\,g_{\epsilon\beta}\,\gamma^{\epsilon}_{\,\,\,\mu\nu}
= \s{E} {^{\mu\beta}}\gamma_{\beta\mu\nu}=-\s{E} {^{\mu\beta}}\gamma_{\mu\beta\nu}$,\\
from which \eqref{prop1} follows.\\
Finally, it is clear that if $b=0$, then $\s{E}{^{\mu}}{_{\nu \|\m{\mu}}}=E{^{\mu}}{_{\nu ;{\mu}}}$. On the other hand, if $\s{E}{^{\mu\nu}}$
is symmetric, then the term $\s{E} {^{\mu\alpha}}\gamma_{\mu\alpha\nu}$ vanishes since $\gamma_{\mu\alpha\nu}$ is skew-symmetric in $\mu$ and $\alpha$.
\end{proof}

\begin{thm}
If $\s{E} {^{\mu\nu}}$ is a tensor of order 2 on a PAP-space $(M,\,\undersym{\lambda}{i}\,)$, then we have the identity
\begin{equation}\label{prop2}
\s{E} {^{\mu}}{_{\nu \|\m{\mu}}}= \s{E} {^\mu}{_{\nu |\m{\mu}}}+(b-1)\,\s{E} {^{\mu\alpha}}\,\gamma_{\mu\alpha\nu}.
\end{equation}
Consequently, if $b=1$ (the AP case) or $\s{E}{^{\mu\nu}}$ is symmetric, we get \begin{equation*}
\s{E}{^{\mu}}{_{\nu\|\m{\mu}}}=\s{E} {^\mu}{_{\nu |\m{\mu}}}
\end{equation*}
\end{thm}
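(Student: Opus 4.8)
The plan is to repeat, almost verbatim, the index computation behind the preceding theorem (identity \eqref{prop1}), but with the Weitzenb\"ock connection \eqref{canonical} playing the role that the Levi-Civita connection \eqref{Christoffel} played there. The key algebraic input is that the parameterized canonical connection \eqref{p-canonical} and the Weitzenb\"ock connection differ only in the coefficient of the contortion: since \eqref{p-canonical} collapses to \eqref{canonical} at $b=1$ (advantage~1 above), we have $\nabla^{\alpha}_{~\mu\nu}-\Gamma^{\alpha}_{~\mu\nu}=(b-1)\,\gamma^{\alpha}_{~\mu\nu}$, and correspondingly their duals \eqref{p-dual} and \eqref{dual} differ by $(b-1)\,\gamma^{\alpha}_{~\nu\mu}$.

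First I would write the two divergences explicitly. Contracting the derivative index with the upper index of $\s{E}$ and using \eqref{p-dual} and \eqref{d-p-canonical} gives the same expansion already obtained in the preceding proof,
\[
\s{E}{^{\mu}}{_{\nu\|\m{\mu}}}=\s{E}{^\mu}_{\nu,\mu}-\s{E}{^\mu}_{\alpha}\nabla^{\alpha}_{~\mu\nu}+\s{E}{^\alpha}_{\nu}\nabla^{\mu}_{~\mu\alpha},
\]
while the identical manipulation performed with the dual Weitzenb\"ock connection yields
\[
\s{E}{^{\mu}}{_{\nu|\m{\mu}}}=\s{E}{^\mu}_{\nu,\mu}-\s{E}{^\mu}_{\alpha}\Gamma^{\alpha}_{~\mu\nu}+\s{E}{^\alpha}_{\nu}\Gamma^{\mu}_{~\mu\alpha}.
\]
Subtracting, the partial-derivative terms cancel and the Levi-Civita parts $\{^{~\alpha}_{\mu~\nu}\}$ of the two connections cancel as well, leaving only $-(b-1)\,\s{E}{^\mu}_{\alpha}\gamma^{\alpha}_{~\mu\nu}+(b-1)\,\s{E}{^\alpha}_{\nu}\gamma^{\mu}_{~\mu\alpha}$.

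Two standard simplifications, both already used in the preceding proof, then finish the argument. The trace $\gamma^{\mu}_{~\mu\alpha}$ vanishes because $\gamma_{\mu\nu\sigma}$ is skew-symmetric in its first pair of indices, which kills the second term. For the first term I would raise the summed index on $\s{E}$ and lower it on $\gamma$ through the metric $g_{\mu\nu}=\e{\lambda}{i}{_\mu}\e{\lambda}{i}{_\nu}$, and again invoke first-pair skew-symmetry, to obtain $\s{E}{^\mu}_{\alpha}\gamma^{\alpha}_{~\mu\nu}=\s{E}{^{\mu\beta}}\gamma_{\beta\mu\nu}=-\s{E}{^{\mu\alpha}}\gamma_{\mu\alpha\nu}$. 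Collecting the signs gives $\s{E}{^{\mu}}{_{\nu\|\m{\mu}}}-\s{E}{^{\mu}}{_{\nu|\m{\mu}}}=(b-1)\,\s{E}{^{\mu\alpha}}\gamma_{\mu\alpha\nu}$, which is \eqref{prop2}. The two corollaries are then immediate: at $b=1$ the factor $(b-1)$ annihilates the correction term, and when $\s{E}{^{\mu\nu}}$ is symmetric its contraction with the $(\mu\alpha)$-antisymmetric $\gamma_{\mu\alpha\nu}$ vanishes.

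I do not expect any genuine obstacle here; the content is routine index bookkeeping. The one point demanding care is the ordering of indices on the non-symmetric connections: because the dual convention \eqref{p-dual} transposes the two lower slots, the derivative index and the two summed indices must be tracked carefully so that the $\nabla$- and $\Gamma$-expansions align term by term before subtraction. As an independent check on the coefficient, I would also specialize \eqref{prop1} to $b=1$ --- where $\|\m{}$ becomes $|\m{}$ --- to read off $\s{E}{^{\mu}}{_{\nu|\m{\mu}}}=\s{E}{^\mu}_{\nu;\mu}+\s{E}{^{\mu\alpha}}\gamma_{\mu\alpha\nu}$, and subtract this from \eqref{prop1}; the difference reproduces \eqref{prop2} in a single line and confirms the factor $(b-1)$.
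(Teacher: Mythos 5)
Your proof is correct, and it is essentially the argument the paper intends: the paper states this theorem without proof, leaving it as the evident analogue of the preceding theorem's proof, and your computation (expand both divergences via the dual connections \eqref{p-dual} and \eqref{dual}, subtract, kill the trace term by first-pair skew-symmetry of $\gamma_{\mu\nu\sigma}$, and flip the sign by raising/lowering indices) mirrors that proof line by line with $\Gamma^{\alpha}_{~\mu\nu}$ in place of $\{^{~\alpha}_{\mu~\nu}\}$. Your closing cross-check --- specializing \eqref{prop1} to $b=1$ and subtracting --- is also valid and gives an even shorter derivation of \eqref{prop2}.
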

The above two theorems imply the following result.
\begin{cor}~\\
\textbf{(a)} For a symmetric tensor $E^{\mu\nu}$, we have from \eqref{prop1} and \eqref{prop2}
\begin{equation}\label{corolla3}
\s{E}{^{\mu}{_{\nu\|\m{\mu}}}}= \s{E} {^\mu}{_{\nu |\m{\mu}} = \s{E} {^\mu}_{\nu;\mu}}.
\end{equation}
\textbf{(b)} In the Riemannian case, the tensor $E^\mu{_\nu}$ defined by \eqref{E-definition} (using the Ricci scalar in the Lagrangian function) coincides with the Einstein tensor, then
\begin{equation}\label{4.4}
E^{\mu}{_ {\nu ; \mu}}\equiv 0,
\end{equation}
and, consequently, for any symmetric tensor defined by \eqref{E-definition}, we have
\begin{equation}\label{4.5}
E^{\mu}{_ {\nu | \m\mu}}\equiv 0\, , \quad \s{E}{^{\mu}}{_ {\nu \| \m\mu}}\equiv 0
\end{equation}
in AP and PAP-spaces, respectively.
\end{cor}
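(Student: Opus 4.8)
The plan is to prove the corollary by specializing the two master identities \eqref{prop1} and \eqref{prop2} to a symmetric tensor and then feeding in the contracted Bianchi identity \eqref{Bianchi}; I would handle parts (a) and (b) separately, since (a) is purely algebraic while (b) also invokes the Riemannian limit and the PAP master identity \eqref{Identity3}.

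For part (a), the first thing I would record is the elementary fact that the contraction $\s{E}{^{\mu\alpha}}\gamma_{\mu\alpha\nu}$ vanishes whenever $\s{E}{^{\mu\nu}}$ is symmetric: since $\gamma_{\mu\alpha\nu}$ is skew-symmetric in its first pair of indices (exactly as used in the proof of Theorem 1), relabelling $\mu\leftrightarrow\alpha$ and using the symmetry of $\s{E}$ shows this contraction equals its own negative. With this in hand the correction terms on the right of both \eqref{prop1} and \eqref{prop2} drop out simultaneously, giving $\s{E}{^{\mu}}{_{\nu\|\m{\mu}}}=\s{E}{^\mu}_{\nu;\mu}$ from the first identity and $\s{E}{^{\mu}}{_{\nu\|\m{\mu}}}=\s{E}{^{\mu}}{_{\nu|\m{\mu}}}$ from the second. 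Chaining these two equalities yields \eqref{corolla3}.

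For part (b), I would first invoke the stated fact that in the Riemannian limit the Hamiltonian derivative \eqref{E-definition} built from the Ricci-scalar Lagrangian reproduces the Einstein tensor, so that the contracted Bianchi identity \eqref{Bianchi} reads $E^{\mu}{_{\nu;\mu}}\equiv0$, which is \eqref{4.4}. To extend this to the AP and PAP settings for an arbitrary symmetric tensor coming from \eqref{E-definition}, the cleanest route is the PAP master identity \eqref{Identity3}: its right-hand side $b(1-b)\,E^{\mu\alpha}\gamma_{\alpha\mu\nu}$ is precisely of the form annihilated by the symmetry argument of part (a), so $\s{E}{^{\mu}}{_{\nu\|\m{\mu}}}\equiv0$ for every value of $b$. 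Part (a) then transfers this vanishing to the dual-connection divergence and the Christoffel divergence; specializing at $b=1$, where the parameterized dual derivative collapses to the AP dual derivative, gives $E^{\mu}{_{\nu|\m{\mu}}}\equiv0$, while the generic-$b$ statement gives $\s{E}{^{\mu}}{_{\nu\|\m{\mu}}}\equiv0$, which together constitute \eqref{4.5}.

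I expect the main obstacle to be logical bookkeeping rather than calculation: I must ensure the ``consequently'' in part (b) is genuinely justified and not circular. The safe order is to establish the vanishing first from \eqref{Identity3} (which holds by construction for any $\s{E}$ obtained through \eqref{E-definition}) and only then read off the Riemannian, AP, and PAP specializations, rather than attempting to deduce the AP and PAP statements directly from the single Riemannian fact \eqref{4.4}. I would also flag the one point requiring an explicit check, namely that at $b=1$ the operator $\|\m{\mu}$ really collapses to the AP operator $|\m{\mu}$, so that the $b=1$ value of $\s{E}{^{\mu}}{_{\nu\|\m{\mu}}}\equiv0$ is indeed the AP identity $E^{\mu}{_{\nu|\m{\mu}}}\equiv0$.
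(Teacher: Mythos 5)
Your proposal is correct. For part (a) it coincides with the paper's own reasoning: one simply chains the ``consequently'' clauses of Theorems 1 and 2, using that the contraction $\s{E}{^{\mu\alpha}}\,\gamma_{\mu\alpha\nu}$ vanishes for symmetric $\s{E}{^{\mu\nu}}$ because $\gamma_{\mu\alpha\nu}$ is skew-symmetric in its first pair of indices. For part (b), however, your route differs from the paper's literal wording in a way that matters. The corollary presents \eqref{4.5} as following ``consequently'' from the Riemannian fact \eqref{4.4}; read literally, that inference does not close, since part (a) only equates the three divergences of one and the same tensor --- it cannot by itself produce their vanishing --- and \eqref{4.4} concerns the Einstein tensor of Riemannian geometry, not an arbitrary symmetric tensor built from \eqref{E-definition} in an AP- or PAP-space. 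You correctly identify this and instead source the vanishing from the master identity \eqref{Identity3}: its right-hand side is annihilated either by the factor $b(1-b)=0$ (the AP case $b=1$, where $\s{E}{^{\mu}}{_{\nu\|\m\mu}}$ reduces to $E^{\mu}{_{\nu|\m\mu}}$ and no symmetry is even needed, recovering \eqref{AP-identity}), or, for general $b$, by the symmetry of the tensor contracted against the skew $\gamma$; part (a) then transfers the vanishing to the remaining divergences. This is the sound reconstruction, and it is in fact how the paper itself proceeds in Section 5, where \eqref{5.1} is deduced from \eqref{corolla3} together with \eqref{4.5}, the latter being grounded in \eqref{AP-identity} and \eqref{Identity3} rather than in \eqref{Bianchi}. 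In short: same mathematical content, but your ordering of the logic --- the geometry's own identity first, the Riemannian/AP/PAP specializations second --- is what actually proves \eqref{4.5}, a step the paper leaves implicit.
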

\section{Discussion and Concluding Remarks}

In the present work, we have used the Dolan-McCrea variational method to derive possible differential identities in PAP-geometry. The importance of this work for physical applications can be discussed in the following points:

\begin{itemize}
  \item [\textbf{(1)}] It is well known that the field equations of GR can be obtained using either one of the following approaches:

   \textbf{a)} \emph{The Einstein approach}: in which the geometrization philosophy plays the main role in constructing the field equations of the theory.
   One of the principles of this philosophy is that "\emph{Laws of nature are just differential} \emph{identities in an appropriate geometry}".
   Einstein has used this principle to write his field equations. In other words, he has used the second Bianchi identity of Riemannian geometry to
   construct the field equations of his theory.

  \textbf{ b)} \emph{Hilbert approach}: in which the standard method of theoretical physics, the action principle, has been used to derive the field equations
   of the theory.

   It is to be noted that the first approach is capable of constructing a complete theory, not only the field equations of the theory. This will be discussed in the following point.

  \item [\textbf{(2)}] Einstein geometrization philosophy can be summarized as follows \cite{W2007-acc.}:

``To understand \underline{Nature}, one has to start with \underline{Geometry} and end with physics''

In applying this philosophy one has to consider its main principles:

\textbf{a)} There is a one-to-one correspondence between gometric objects and physical quantities.

\textbf{b)} Curves (paths) in the chosen geometry are trajectories of test particles.

\textbf{c)} Differential identities represent laws of nature.

In view of the above principles, Einstein has used Riemannian geometry to construct a full theory for gravity, GR, in which the
metric and the curvature tensors represent the gravitational potential and strength, respectively. He has also used the geodesic equations to represent motion
in gravitational fields. Finally, he has used Bianchi identity to write the field equations of GR.

The above mentioned philosophy can be applied to any geometric structure other than the Riemannian one. It has been applied successfully in the context of
conventional AP-geometry (cf. \cite{MW77, WS2010, WSR}), using the differential identity \eqref{AP-identity}.

PAP-geometry is more general than both Riemannian and conventional AP-geometry. It is shown in section 2  that PAP-geometry reduces to AP-geometry
for $b=1$ and to Riemannian geometry for $b=0$. In other words, the later two geometries are special cases of the PAP-one. The importance of the parameter
$b$ has been investigated in many papers (cf. \cite{SM04, W2012}). The value of this parameter is extracted from the results of three different
experiments \cite{MTGuthC2007, SM04, WMK2000}. The value obtained ($b=10^{-3}$) shows that space-time near the Earth is neither Riemannian ($b=0$)
nor conventional AP ($b=1$).

  \item [\textbf{(3)}] To use PAP-geometry as a medium for constructing field theories, the above three principles are to be considered. The curves (paths) of
  the geometry have been derived \cite{W98} and used to describe the motion of spinning particles \cite{WMK2000, WMK2001}. Geometric objects have been
  used to represent physical quantities via the constructed field theories \cite{WKamal2012, WYElHanafy2014}.

The present work is done as a step to complete the use of the  geometrization philosophy in PAP-geometry. The general form of the differential identity
characterizing this geometry is obtained in section 3. This will help in attributing physical properties to geometric objects, especially conservation.
This will be discussed in the next point.

  \item [\textbf{(4)}] The results obtained from the two theorems given in section 4 are of special importance. It is clear from \eqref{corolla3}, \eqref{4.4}
  and \eqref{4.5} that any symmetric tensor defined by \eqref{E-definition} is subject to the differential identity of the type \eqref{4.4}.
  This result is independent of the values of the parameter $b$.

Now, for any symmetric tensor defined by \eqref{E-definition} in PAP-geometry, we have
\begin{equation}\label{5.1}
\s{E} {^{\mu\nu}}{_{;\mu}}\equiv 0 \,.
\end{equation}
Due to the structure of the parameterized connection \eqref{p-canonical}, the tensor $\s{E} {^{\mu\nu}}$ can always be written in the form
\begin{equation*}\label{5.2}
  \s{E} {^{\mu\nu}}= G ^{\mu\nu} +\s{T} {^{\mu\nu}}
\end{equation*}
where $G ^{\mu\nu}$ is Einstein tensor defined in terms of \eqref{Christoffel} as usual and $\s{T} {^{\mu\nu}}$ is a second order symmetric tensor
which vanishes when $b=0$. Consequently, the identity \eqref{5.1} can be written as
\begin{equation*}\label{5.3}
 G ^{\mu\nu}{_{~;\mu}}+
 \s{T} {^{\mu\nu}}{_{;\mu}}\equiv 0 \,.
\end{equation*}
As the first term vanishes, due to Bianchi identity, then we get the identity
\begin{equation}\label{5.4}
   \s{T} {^{\mu\nu}}{_{;\mu}}\equiv 0 \,.
\end{equation}
This implies that the physical entity represented by $ \s{T} {^{\mu\nu}}$ is conserved in any field theory constructed in PAP-geometry.
The identity \eqref{5.4} is thus an identity of PAP-geometry, independent of any such field theory. The tensor $ \s{T} {^{\mu\nu}}$ is usually
used as a geometric representation of a material-energy distribution in any field theory constructed in PAP-geometry \cite{WKamal2012, WYElHanafy2014}.

  \item [\textbf{(5)}] Finally, the following properties are guaranteed in any field theory constructed in the context of PAP-geometry:

\textbf{a)} The theory of GR, with all its consequences, can be obtained upon taking $ b=0$.

\textbf{b)} As the parameterized connection \eqref{p-canonical} is metric, the motion along curves of PAP-geometry \cite{W98} preserves the gravitational potential.

\textbf{c)} One can always define a geometric material energy tensor in terms of the BB of the geometry.

\textbf{d)} Conservation is not violated in any of such theories.
\end{itemize}


\end{document}